 \newcommand{\Rmnum}[1]{\expandafter\@slowromancap\romannumeral #1@}
\newtheorem{theorem}{Theorem}[section]
\newtheorem{lemma}[theorem]{Lemma}
\newcommand{\C}{{\mathbb C}}
\newcommand{\be}{\begin{equation}}
\newcommand{\ee}{\end{equation}}
\newcommand{\bea}{\begin{eqnarray}}
\newcommand{\eea}{\end{eqnarray}}
\newcommand{\ba}{\begin{array}}
\newcommand{\ea}{\end{array}}
\newcommand{\ol}{\overline}
\newcommand{\id}{\mathbb{I}}
\newcommand{\sig}{\sigma}
\newcommand{\Sig}{\Sigma}
\newcommand{\Lam}{\Lambda}
\newcommand{\gam}{\gamma}
\newcommand{\Dta}{\Delta}
\newcommand{\pt}{\partial}
\numberwithin{equation}{section}
\begin{document}
\title[GLM Representation for the 2-NLS equation on the interval]{The GLM representation of the global relation for the two-component nonlinear Schr\"odinger equation on the interval }

\author[J.Xu]{Jian Xu*}
\address{College of Science\\
University of Shanghai for Science and Technology\\
Shanghai 200093\\
People's  Republic of China}
\email{correspondence author: jianxu@usst.edu.cn}

\author[E.Fan]{Engui Fan}
\address{School of Mathematical Sciences, Institute of Mathematics and Key Laboratory of Mathematics for Nonlinear Science\\
Fudan University\\
Shanghai 200433\\
People's  Republic of China}
\email{faneg@fudan.edu.cn}

\keywords{Integrable Systems, Manakov Systems, Initial-boundary Value Problem, Dirichlet-to-Neumann Map}

\date{\today}

\begin{abstract}
In a previous work, we show that the solution of the initial-boundary value problem for the two-component nonlinear Schr\"odinger equation on the finite interval can be expressed in terms of the solution of a $3\times 3$ Riemann-Hilbert problem. The relevant jump matrices are explicitly given in terms of the three matrix-value spectral functions $s(k)$, $S(k)$ and $S_L(k)$, which in turn are defined in terms of the initial values, boundary values at $x=0$ and boundary values at $x=L$, respectively. However, for a well-posed problem, only part of the boundary values can be prescribed, the remaining boundary data cannot be independently specified, but are determined by the so-called global relation. Here, we use a Gelfand-Levitan-Marchenko representation to derive an expression for the generalized Dirichlet-to-Neumann map to characterize the unknown boundary values in physical domain, which is different from the approach, in fact it analyzed the global relation in spectral domain, used in the previous work. And, we can show that these two representations are equivalent.

\end{abstract}

\maketitle

\section{Introduction}

The two-component nonlinear Schr\"odinger equation
\be\label{2-NLSe}
\left\{
\ba{l}
iq_{1t}+q_{1xx}-2\sig(|q_1|^2+|q_2|^2)q_1=0,\\
iq_{2t}+q_{2xx}-2\sig(|q_1|^2+|q_2|^2)q_2=0.
\ea
\right.\qquad \sig=\pm 1.
\ee
where $q_1(x,t)$ and $q_2(x,t)$ are complex-valued functions. Here, $\sig=1$ means defocusing case and $\sig=-1$ means focusing case.
It was first introduced by Manakov to describe the propagation of an optical
pulse in a birefringent optical fiber \cite{m74}, so it is usually called Manokov equation or Manokov system. Subsequently, this system also arises
in the context of multicomponent Bose-Einstein condensates \cite{ba2001}. It is an integrable equation and the initial value problem on the line can be analyzed by means of the Inverse Scattering Transform (IST) as demonstrated by Manokov in \cite{m74}. However, in many (perhaps most) laboratory and field situations, the solution is generated by what corresponds to the imposition of boundary conditions rather than initial conditions. Thus, we need analyze these equations with boundary value problems (BVPs), or initial-boundary value problems (IBVPs), instead of pure initial-value problems. In the last eighteen years, a generalization of the IST developed by Fokas and his collaborators, has made it possible to analyze IBVPs for integrable equations \cite{f1,f2,f3,f4,fi1,fi2,fi3,fis,fi4,abmfs1,abmfs2,fcpam}. Initially, these developments were all carried out for equations with Lax pairs involving $2 \times 2$ matrices. However, in \cite{l1} the methodology was further developed to include the case of equations with $3\times 3$ Lax pairs. Since the work of \cite{l1}, in which IBVPs were considered on the half-line, it has been a natural problem to extend the Fokas methodology to the case of initial-boundary value problems on an interval.

In a previous work \cite{jf3}, the authors show that the solution of the IBVPs on an interval $\Omega=\{(x,t)|0\le x\le L, 0\le t\le T\}$, here $L>0$ is a positive fixed constant and $T>0$ being a fixed final time, for the two-component nonlinear Schr\"odinger equation can be recovered in terms of the solution of a $3\times 3$ matrix Riemann-Hilbert problem. The relevant jump matrices are explicitly given in terms of the three matrix-value spectral functions $s(k)$, $S(k)$ and $S_L(k)$. The matrix function $s(k)$ is defined in terms of the initial data $q_{10}(x)=q_1(x,t=0),q_{20}(x)=q_2(x,t=0)$ via a system of linear Volterra integral equations; the matrix functions $\{S(k),S_L(k)\}$ are defined in terms of the boundary data at $x=0$ and boundary values at $x=L$, respectively, also via systems of linear Volterra integral equations. However, the integral equations defining $\{S(k),S_L(k)\}$ involve {\it all} boundary values, whereas for a well-posed problem, only part of the boundary values can be prescribed, the remaining boundary data cannot be independently specified.  Thus, the complete solution of a concrete IBVPs requires the characterization of $\{S(k),S_L(k)\}$ in terms of the given initial and boundary conditions. Since these three matrix functions $\{s(k),S(k),S_L(k)\}$ are determined by the so-called {\it global relation}, it makes the characterization possible. Thus, before the functions $\{S(k),S_L(k)\}$ can be constructed from the above linear integral equations, the global relation must first be used to eliminate the unknown boundary data.

The analysis of the global relation can take place in two different domains: in the physical domain or in the spectral domain. Although
these two domains are related by a transform, each viewpoint has its own advantages. In the previous work \cite{jf3}, we did the analysis in the spectral domain. In this paper, we do the analysis in the physical domain. And we also show that the expression for the generalized Dirichlet-to-Neumann map (i.e. the map which determines the unknown boundary values from the known ones) in this paper is equivalent to the expression obtained in section 4 in \cite{jf3}.

{\bf Organization of the paper:} In section 2 we recall the Lax pair formulation and the global relation associated with the two-component nonlinear Schr\"odinger equation. In section 3, we derive a GLM representation for an appropriate eigenfunction of the Lax pair. In section 4, we analyze both the Dirichlet and Neumann problems of the two-component nonlinear Schr\"odinger equation on the finite interval with zero initial conditions. Furthermore, in section 4 we establish the equivalence of the formulas obtained in \cite{jf3} with the formulas obtained via the GLM representations.

\section{The global relation}

\subsection{Lax pair and spectral analysis}
The 2-NLS equation admits a $3\times 3$ Lax pair,
\begin{subequations}\label{Laxpair}
\be\label{Lax-x}
\Psi_x=U\Psi,\quad \Psi=\left(\ba{c}\Psi_1\\\Psi_2\\\Psi_3\ea\right).
\ee
\be\label{Lax-t}
\Psi_t=V\Psi.
\ee
\end{subequations}
where
\be\label{Udef}
U=ik\Lam+V_1.
\ee
and
\be\label{Vdef}
V=2ik^2\Lam+V_2
\ee
here
\be\label{Lamdef}
\Lam=\left(\ba{ccc}-1&0&0\\0&1&0\\0&0&1\ea\right),V_1=\left(\ba{ccc}0&q_1&q_2\\ \sig\bar q_1&0&0\\ \sig\bar q_2&0&0\ea\right),V_2=2kV_2^{(1)}+V_2^{(0)}.
\ee
where
\be
V_2^{(1)}=V_1,\qquad
V_2^{(0)}=i\Lam (V^2_1-V_{1x}).
\ee

Following \cite{jf3}, we introduce a new eigenfunction $\mu(x,t,k)$ by
\be\label{neweigfun}
\Psi=\mu e^{i\Lam kx+2i\Lam k^2t}
\ee
then we find the Lax pair equations
\be\label{muLaxe}
\left\{
\ba{l}
\mu_x-[ik\Lam,\mu]=V_1\mu,\\
\mu_t-[2ik^2\Lam,\mu]=V_2\mu.
\ea
\right.
\ee
Letting $\hat A$ denotes the operators which acts on a $3\times 3$ matrix $X$ by $\hat A X=[A,X]$ , then the equations in (\ref{muLaxe}) can be written in differential form as
\be\label{mudiffform}
d(e^{-(ikx+2ik^2t)\hat \Lam}\mu)=W,
\ee
where $W(x,t,k)$ is the closed one-form defined by
\be\label{Wdef}
W=e^{-(ikx+2ik^2t)\hat \Lam}(V_1dx+V_2dt)\mu.
\ee

We introduce four solutions $\{\mu_j(x,t,k)\}_{i=1}^{4}$ of (\ref{muLaxe}) by the Volterra integral equations
\be\label{mujdef}
\mu_j(x,t,k)=\id+\int_{\gam_j}e^{(i kx+2i k^2t)\hat \Lam}W_j(x',t',k).\qquad j=1,2,3,4.
\ee
where $\id$ denote the identity matrix, $W_j$ is given by (\ref{Wdef}) with $\mu$ replaced with $\mu_j$, and the contours $\{\gam_j\}_1^4$ are showed in Figure \ref{fig-1}.
\begin{figure}[th]
\centering
\includegraphics{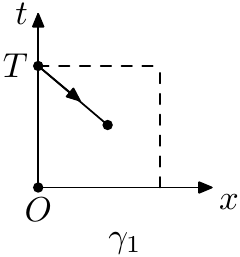}
\includegraphics{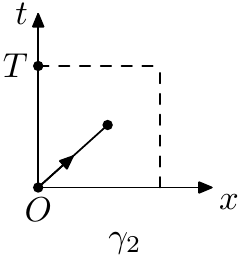}
\includegraphics{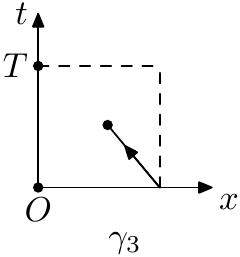}
\includegraphics{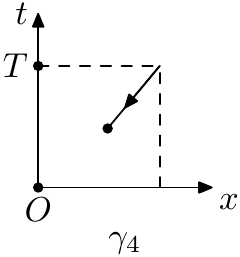}
\caption{The four contours $\gam_1,\gam_2,\gam_3$ and $\gam_4$ in the $(x,t)-$domain.}\label{fig-1}
\end{figure}

The solutions $\{\mu_j(x,t,k)\}_{i=1}^{4}$ satisfy the symmetry
\be\label{symm}
\mu_j^{-1}(x,t,k)=A\ol{\mu_j(x,t,\bar k)}^{T}A,
\ee
where
\be
A=\left(
\ba{ccc}
1&0&0\\
0&-\sig&0\\
0&0&-\sig
\ea
\right),\quad \sig^2=1.
\ee
Here, the superscript $T$ denotes a matrix transpose.

Then, we can define the $3\times 3$ matrix value spectral functions $s(k)$, $S(k)$ and $S_L(k)$ by
\begin{subequations}\label{sSSLdef}
\be\label{smu3}
s(k)=\mu_3(0,0,k),
\ee
\be\label{Smu1}
S(k)=\mu_1(0,0,k)=e^{-2ik^2 T\hat \Lam}\mu_2^{-1}(0,T,k),
\ee
\be\label{SLmu4}
S_L(k)=\mu_4(L,0,k)=e^{-2ik^2 T\hat \Lam}\mu_3^{-1}(L,T,k).
\ee
\end{subequations}

We also introduce the functions $\{\Phi_{ij}(t,k),\phi_{ij}(t,k)\}_{i,j=1}^{3}$ as follows
\begin{subequations}
\be\label{Phidef}
\mu_2(0,t,k)=\left(\ba{lll}\Phi_{11}(t,k)&\Phi_{12}(t,k)&\Phi_{13}(t,k)\\
\Phi_{21}(t,k)&\Phi_{22}(t,k)&\Phi_{23}(t,k)\\\Phi_{31}(t,k)&\Phi_{32}(t,k)&\Phi_{33}(t,k)\ea\right),
\ee
\be\label{phidef}
\mu_3(L,t,k)=\left(\ba{lll}\phi_{11}(t,k)&\phi_{12}(t,k)&\phi_{13}(t,k)\\
\phi_{21}(t,k)&\phi_{22}(t,k)&\phi_{23}(t,k)\\\phi_{31}(t,k)&\phi_{32}(t,k)&\phi_{33}(t,k)\ea\right).
\ee
\end{subequations}

Denoting the sets $\{D_j\}_{j=1}^{4}$ by (see Figure \ref{fig-2}),
\[
D_j=\{\frac{j-1}{2}\pi<\arg{k}<\frac{j}{2}\pi\},
\]
\begin{figure}[th]
\centering
\includegraphics{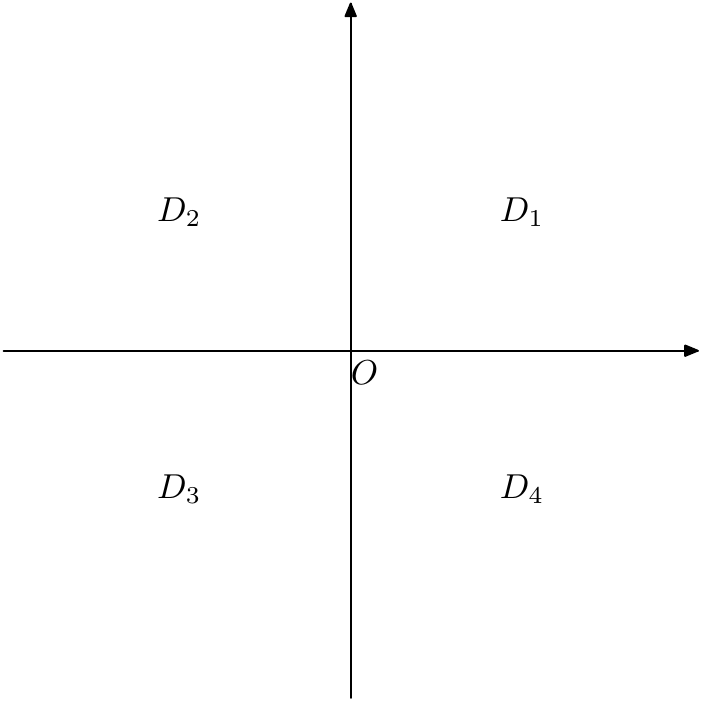}
\caption{The sets $D_n$, $n=1,\ldots ,4$, which decompose the complex $k-$plane.}\label{fig-2}
\end{figure}
it follows from (\ref{mujdef}) and (\ref{sSSLdef}), the functions $\{s(k),S(k),S_L(k)\}$ have the following boundedness properties:
\[
\ba{ll}
s(k):&(D_3\cup D_4,D_1\cup D_2,D_1\cup D_2),\\
S(k):&(D_2\cup D_4,D_1\cup D_3,D_1\cup D_3),\\
S_L(k):&(D_2\cup D_4,D_1\cup D_3,D_1\cup D_3)\\
\ea
\]

\subsection{The global relation}
The spectral functions $S(k),S_L(k)$ and $s(k)$ are not independent but satisfy an important relation. Indeed, it follows from (\ref{sSSLdef}) that
\be
\mu_1(x,t,k)e^{(ikx+2ik^2t)\hat \Lam}\{S^{-1}(k)s(k)e^{-ikL\hat\Lam}S_L(k)\}=\mu_4(x,t,k).%\quad k\in(D_3\cup D_4,D_3\cup D_4,D_1\cup D_2).
\ee
Since $\mu_1(0,T,k)=\id$, evaluation at $(0,T)$ yields the following global relation:
\be\label{globalrel}
S^{-1}(k)s(k)e^{-ikL\hat\Lam}S_L(k)=e^{-2ik^2T\hat \Lam}c(T,k),%\quad k\in(D_3\cup D_4,D_3\cup D_4,D_1\cup D_2).
\ee
where $c(T,k)=\mu_4(0,T,k)$.
It imposes a relation between the Dirichlet and Neumann boundary values of $q_{1}(x,t)$ and $q_{2}(x,t)$. The Dirichlet-to-
Neumann map is determined by solving this relation for the unknown boundary values.

\section{The GLM approach}

In this section, we derive Gelfand-Levitan-Marchenko (GLM) representations for the eigenfunctions $\Phi_{ij}$ and $\phi_{ij}$.

% PHI and phi GLM

\begin{theorem}
The eigenfunctions $\{\Phi_{ij}\}_{i,j=1}^{3}$ admit the following GLM representations,
\begin{subequations}\label{PhiGLM}
\be\label{Phi1GLM}
\footnotesize
\left\{
\ba{l}
\Phi_{11}(t,k)=1+\int_{-t}^{t}\left[\tilde L_{11}(t,s)-\frac{i}{2}(g_{01}(t)M_{21}(t,s)+g_{02}(t)M_{31}(t,s))+kM_{11}(t,s)\right]e^{-2ik^2(s-t)}ds\\
\Phi_{21}(t,k)=\int_{-t}^{t}\left[\tilde L_{21}(t,s)+\frac{i}{2}\sig \bar g_{01}(t)M_{11}(t,s)+kM_{21}(t,s)\right]e^{-2ik^2(s-t)}ds\\
\Phi_{31}(t,k)=\int_{-t}^{t}\left[\tilde L_{31}(t,s)+\frac{i}{2}\sig \bar g_{02}(t)M_{11}(t,s)+kM_{31}(t,s)\right]e^{-2ik^2(s-t)}ds.
\ea
\right.
\ee
\be\label{Phi2GLM}
\footnotesize
\left\{
\ba{l}
\Phi_{12}(t,k)=\int_{-t}^{t}\left[\tilde L_{12}(t,s)-\frac{i}{2}(g_{01}(t)M_{22}(t,s)+g_{02}(t)M_{32}(t,s))+kM_{12}(t,s)\right]e^{2ik^2(s-t)}ds\\
\Phi_{22}(t,k)=1+\int_{-t}^{t}\left[\tilde L_{22}(t,s)+\frac{i}{2}\sig \bar g_{01}(t)M_{12}(t,s)+kM_{22}(t,s)\right]e^{2ik^2(s-t)}ds\\
\Phi_{32}(t,k)=\int_{-t}^{t}\left[\tilde L_{32}(t,s)+\frac{i}{2}\sig \bar g_{02}(t)M_{12}(t,s)+kM_{32}(t,s)\right]e^{2ik^2(s-t)}ds.
\ea
\right.
\ee
\be\label{Phi3GLM}
\footnotesize
\left\{
\ba{l}
\Phi_{13}(t,k)=\int_{-t}^{t}\left[\tilde L_{13}(t,s)-\frac{i}{2}(g_{01}(t)M_{23}(t,s)+g_{02}(t)M_{33}(t,s))+kM_{13}(t,s)\right]e^{2ik^2(s-t)}ds\\
\Phi_{23}(t,k)=\int_{-t}^{t}\left[\tilde L_{23}(t,s)+\frac{i}{2}\sig \bar g_{01}(t)M_{13}(t,s)+kM_{23}(t,s)\right]e^{2ik^2(s-t)}ds\\
\Phi_{33}(t,k)=1+\int_{-t}^{t}\left[\tilde L_{33}(t,s)+\frac{i}{2}\sig \bar g_{02}(t)M_{13}(t,s)+kM_{33}(t,s)\right]e^{2ik^2(s-t)}ds.
\ea
\right.
\ee
\end{subequations}
Here the functions $\{M_{ij}(t,s),\tilde L_{ij}(t,s)\}_{i,j=1}^{3}$ are the elements of the $3\times 3$ matrix $\tilde L(t,s)$ and $M(t,s)$. And they satisfy the initial conditions
\begin{subequations}\label{Goursatintdata}
\footnotesize
\be
\left\{
\ba{l}
M_{11}(t,-t)=M_{22}(t,-t)=M_{23}(t,-t)=M_{32}(t,-t)=M_{33}(t,-t)=0\\
\tilde L_{11}(t,-t)=\tilde L_{22}(t,-t)=\tilde L_{23}(t,-t)=\tilde L_{32}(t,-t)=\tilde L_{33}(t,-t)=0.
\ea
\right.
\ee

\be
\left\{
\ba{llll}
M_{12}(t,t)=g_{01}(t)& M_{13}(t,t)=g_{02}(t)& M_{21}(t,t)=\sig \bar g_{01}(t)& M_{31}(t,t)=\sig \bar g_{02}(t)\\
\tilde L_{12}(t,t)=\frac{i}{2}g_{11}(t)& \tilde L_{13}(t,t)=\frac{i}{2}g_{12}(t)& \tilde L_{21}(t,t)=-\frac{i}{2}\sig \bar g_{11}(t)& \tilde L_{31}(t,t)=-\frac{i}{2}\sig \bar g_{12}(t)
\ea
\right.
\ee
\end{subequations}

and an ODE systems
% M's Systems
\begin{subequations}\label{MGoursat}
\be
\footnotesize
\left\{
\ba{l}
M_{11,t}(t,s)+M_{11,s}(t,s)=2(g_{01}(t)\tilde L_{21}(t,s)+g_{02}(t)\tilde L_{31}(t,s))+i(g_{11}(t)M_{21}(t,s)+g_{12}(t)M_{31}(t,s))\\
M_{21,t}(t,s)-M_{21,s}(t,s)=2\sig \bar g_{01}(t)\tilde L_{11}(t,s)-i\sig \bar g_{01}(t)M_{11}(t,s)\\
M_{31,t}(t,s)-M_{31,s}(t,s)=2\sig \bar g_{02}(t)\tilde L_{11}(t,s)-i\sig \bar g_{02}(t)M_{11}(t,s)
\ea
\right.
\ee
\be
\footnotesize
\left\{
\ba{l}
M_{12,t}(t,s)-M_{12,s}(t,s)=2(g_{01}(t)\tilde L_{22}(t,s)+g_{02}(t)\tilde L_{32}(t,s))+i(g_{11}(t)M_{22}(t,s)+g_{12}(t)M_{32}(t,s))\\
M_{22,t}(t,s)+M_{22,s}(t,s)=2\sig \bar g_{01}(t)\tilde L_{12}(t,s)-i\sig \bar g_{01}(t)M_{12}(t,s)\\
M_{32,t}(t,s)+M_{32,s}(t,s)=2\sig \bar g_{02}(t)\tilde L_{12}(t,s)-i\sig \bar g_{02}(t)M_{12}(t,s)
\ea
\right.
\ee
\be
\footnotesize
\left\{
\ba{l}
M_{13,t}(t,s)-M_{13,s}(t,s)=2(g_{01}(t)\tilde L_{23}(t,s)+g_{02}(t)\tilde L_{33}(t,s))+i(g_{11}(t)M_{23}(t,s)+g_{12}(t)M_{33}(t,s))\\
M_{23,t}(t,s)+M_{23,s}(t,s)=2\sig \bar g_{01}(t)\tilde L_{13}(t,s)-i\sig \bar g_{01}(t)M_{13}(t,s)\\
M_{33,t}(t,s)+M_{33,s}(t,s)=2\sig \bar g_{02}(t)\tilde L_{13}(t,s)-i\sig \bar g_{02}(t)M_{13}(t,s)
\ea
\right.
\ee
\end{subequations}

% L's systems
\begin{subequations}\label{LGoursat}
\scriptsize
\be
\left\{
\ba{rl}
\tilde L_{11,t}(t,s)+\tilde L_{11,s}(t,s)=& i(g_{11}(t)\tilde L_{21}(t,s)+g_{12}(t)\tilde L_{31}(t,s))+\frac{\sig}{2}\left(g_{01}\bar g_{11}+g_{02}\bar g_{12}-\bar g_{01}g_{11}-\bar g_{02}g_{12}\right)(t)M_{11}(t,s)\\
&+\frac{1}{2}\left\{[i\dot{g}_{01}-\sig(|g_{01}|^2+|g_{02}|^2)g_{01}](t)M_{21}(t,s)+[i\dot{g}_{02}-\sig(|g_{01}|^2+|g_{02}|^2)g_{02}](t)M_{31}(t,s)\right\}\\
\tilde L_{21,t}(t,s)-\tilde L_{21,s}(t,s)=& -i\sig\bar g_{11}(t)\tilde L_{11}(t,s)+\frac{\sig}{2}(\bar g_{01}g_{11}-g_{01}\bar g_{11})(t)M_{21}(t,s)+\frac{\sig}{2}(\bar g_{01}g_{12}-g_{02}\bar g_{11})(t)M_{31}(t,s)\\
&+\frac{\sig}{2}\left[-i\dot{\bar g}_{01}-\sig (|g_{01}|^2+|g_{02}|^2)\bar g_{01}\right](t)M_{11}(t,s)\\
\tilde L_{31,t}(t,s)-\tilde L_{31,s}(t,s)=& -i\sig\bar g_{12}(t)\tilde L_{11}(t,s)+\frac{\sig}{2}(\bar g_{02}g_{11}-g_{01}\bar g_{12})(t)M_{21}(t,s)+\frac{\sig}{2}(\bar g_{02}g_{12}-g_{02}\bar g_{12})(t)M_{31}(t,s)\\
&+\frac{\sig}{2}\left[-i\dot{\bar g}_{02}-\sig (|g_{01}|^2+|g_{02}|^2)\bar g_{02}\right](t)M_{11}(t,s)
\ea
\right.
\ee

\be
\left\{
\ba{rl}
\tilde L_{12,t}(t,s)-\tilde L_{12,s}(t,s)=& i(g_{11}(t)\tilde L_{22}(t,s)+g_{12}(t)\tilde L_{32}(t,s))+\frac{\sig}{2}\left(g_{01}\bar g_{11}+g_{02}\bar g_{12}-\bar g_{01}g_{11}-\bar g_{02}g_{12}\right)(t)M_{12}(t,s)\\
&+\frac{1}{2}\left\{[i\dot{g}_{01}-\sig(|g_{01}|^2+|g_{02}|^2)g_{01}](t)M_{22}(t,s)+[i\dot{g}_{02}-\sig(|g_{01}|^2+|g_{02}|^2)g_{02}](t)M_{32}(t,s)\right\}\\
\tilde L_{22,t}(t,s)+\tilde L_{22,s}(t,s)=& -i\sig\bar g_{11}(t)\tilde L_{12}(t,s)+\frac{\sig}{2}(\bar g_{01}g_{11}-g_{01}\bar g_{11})(t)M_{22}(t,s)+\frac{\sig}{2}(\bar g_{01}g_{12}-g_{02}\bar g_{11})(t)M_{32}(t,s)\\
&+\frac{\sig}{2}\left[-i\dot{\bar g}_{01}-\sig (|g_{01}|^2+|g_{02}|^2)\bar g_{01}\right](t)M_{11}(t,s)\\
\tilde L_{32,t}(t,s)+\tilde L_{32,s}(t,s)=& -i\sig\bar g_{12}(t)\tilde L_{12}(t,s)+\frac{\sig}{2}(\bar g_{02}g_{11}-g_{01}\bar g_{12})(t)M_{22}(t,s)+\frac{\sig}{2}(\bar g_{02}g_{12}-g_{02}\bar g_{12})(t)M_{32}(t,s)\\
&+\frac{\sig}{2}\left[-i\dot{\bar g}_{02}-\sig (|g_{01}|^2+|g_{02}|^2)\bar g_{02}\right](t)M_{12}(t,s)
\ea
\right.
\ee

\be
\left\{
\ba{rl}
\tilde L_{13,t}(t,s)-\tilde L_{13,s}(t,s)=& i(g_{11}(t)\tilde L_{23}(t,s)+g_{12}(t)\tilde L_{33}(t,s))+\frac{\sig}{2}\left(g_{01}\bar g_{11}+g_{02}\bar g_{12}-\bar g_{01}g_{11}-\bar g_{02}g_{12}\right)(t)M_{13}(t,s)\\
&+\frac{1}{2}\left\{[i\dot{g}_{01}-\sig(|g_{01}|^2+|g_{02}|^2)g_{01}](t)M_{23}(t,s)+[i\dot{g}_{02}-\sig(|g_{01}|^2+|g_{02}|^2)g_{02}](t)M_{33}(t,s)\right\}\\
\tilde L_{23,t}(t,s)+\tilde L_{23,s}(t,s)=& -i\sig\bar g_{11}(t)\tilde L_{13}(t,s)+\frac{\sig}{2}(\bar g_{01}g_{11}-g_{01}\bar g_{11})(t)M_{23}(t,s)+\frac{\sig}{2}(\bar g_{01}g_{12}-g_{02}\bar g_{11})(t)M_{33}(t,s)\\
&+\frac{\sig}{2}\left[-i\dot{\bar g}_{01}-\sig (|g_{01}|^2+|g_{02}|^2)\bar g_{01}\right](t)M_{13}(t,s)\\
\tilde L_{33,t}(t,s)+\tilde L_{33,s}(t,s)=& -i\sig\bar g_{12}(t)\tilde L_{13}(t,s)+\frac{\sig}{2}(\bar g_{02}g_{11}-g_{01}\bar g_{12})(t)M_{23}(t,s)+\frac{\sig}{2}(\bar g_{02}g_{12}-g_{02}\bar g_{12})(t)M_{33}(t,s)\\
&+\frac{\sig}{2}\left[-i\dot{\bar g}_{02}-\sig (|g_{01}|^2+|g_{02}|^2)\bar g_{02}\right](t)M_{13}(t,s)
\ea
\right.
\ee
\end{subequations}

\normalsize
The GLM representations of $\{\phi_{ij}\}_{i,j=1}^{3}$ are similar to equations (\ref{PhiGLM}). There are two differences. First, replacing the boundary data $\{g_{01}(t),g_{02}(t),g_{11}(t),g_{12}(t)\}$ by $\{f_{01}(t),f_{02}(t),f_{11}(t),f_{12}(t)\}$. Second, replacing the functions $M_{ij}(t,s), \tilde L_{ij}(t,s)$ by $\mathcal{M}_{ij}(t,s), \mathcal{\tilde L}_{ij}(t,s)$. And the functions $\mathcal{M}_{ij}(t,s), \mathcal{\tilde L}_{ij}(t,s)$ satisfy the similar systems of equations (\ref{MGoursat}) and (\ref{LGoursat}) with $\{g_{01}(t),g_{02}(t),g_{11}(t),g_{12}(t)\}$ replaced by $\{f_{01}(t),f_{02}(t),f_{11}(t),f_{12}(t)\}$, too.
\end{theorem}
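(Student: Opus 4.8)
The starting point is that $\Phi(t,k):=\mu_2(0,t,k)$, the matrix with entries $\Phi_{ij}$, solves the $t$-part of \eqref{muLaxe} restricted to $x=0$,
\[
\Phi_t-2ik^2[\Lam,\Phi]=\Big(2kV_1(0,t)+i\Lam\big(V_1^2(0,t)-V_{1x}(0,t)\big)\Big)\Phi ,\qquad \Phi(0,k)=\id ,
\]
the normalization following from the definition of $\mu_2$. The coefficient matrix depends only on the boundary data, since $V_1(0,t)$ is built from $g_{01},g_{02},\sig\bar g_{01},\sig\bar g_{02}$ and $V_{1x}(0,t)$ from $g_{11},g_{12},\sig\bar g_{11},\sig\bar g_{12}$. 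The plan is to substitute the ansatz \eqref{PhiGLM} into this ODE and read off the equations the kernels must satisfy; here it is convenient to observe that the bracketed integrand in \eqref{PhiGLM} is exactly $\ti L(t,s)+\tfrac{i}{2}\Lam V_1(0,t)M(t,s)+kM(t,s)$, with the scalar factor $e^{-2ik^2(s-t)}$ on the first column and $e^{+2ik^2(s-t)}$ on the other two.

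Carrying this out, the $\pt_t$ acting on the variable endpoints of $\int_{-t}^{t}$ produces the traces of the kernels at $s=\pm t$, while the identity $\pm 2ik^2 e^{\pm 2ik^2(s-t)}=\pt_s e^{\pm 2ik^2(s-t)}$ lets one integrate by parts and trade the $k^{2}$- and $k^{3}$-terms generated by $2ik^2[\Lam,\Phi]$ and $2kV_1(0,t)\Phi$ for $s$-derivatives of the kernels plus further boundary terms. After this reduction the equation separates into integrals against $e^{\pm 2ik^2(s-t)}$ with coefficients affine in $k$, together with boundary contributions at $s=t$ and $s=-t$ that are polynomial in $k$. Equating the coefficients of $k^1$ and of $k^0$ inside the integrals yields the first-order systems \eqref{MGoursat} for $M$ and \eqref{LGoursat} for $\ti L$, while equating the boundary terms at $s=t$ and $s=-t$ yields the characteristic data \eqref{Goursatintdata}; the $\dot g_{0\ell}$ and cubic $\sig(|g_{01}|^2+|g_{02}|^2)g_{0\ell}$ terms in \eqref{LGoursat} arise from $V_1^2(0,t)$ inside $V_2^{(0)}$ together with the $t$-differentiation of the renormalization $\tfrac{i}{2}\Lam V_1(0,t)M$, which brings in $\dot V_1(0,t)$.

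Conversely, \eqref{MGoursat}--\eqref{LGoursat} together with the data \eqref{Goursatintdata} is a Goursat problem for a linear first-order system on the triangle $\{(t,s):t\ge 0,\,|s|\le t\}$, with data on its two sides $s=\pm t$; since the coefficients are continuous in $t$ it has a unique solution there, produced by the usual Picard iteration. With $M$ and $\ti L$ so defined, running the computation of the previous paragraph in reverse shows that the right-hand side of \eqref{PhiGLM} solves the same $t$-ODE as $\Phi$ and takes the value $\id$ at $t=0$; by uniqueness for linear ODEs (with $k$ a parameter) it equals $\Phi=\mu_2(0,t,k)$, which is \eqref{PhiGLM}. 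For the $\phi_{ij}$ one notes that $\mu_3(L,t,k)$ satisfies the same $t$-ODE with $x=0$ replaced by $x=L$ --- hence with $V_1(0,t),V_{1x}(0,t)$ expressed through $\{f_{01},f_{02},f_{11},f_{12}\}$ --- and with $\mu_3(L,0,k)=\id$; repeating the argument and relabelling $M,\ti L\mapsto\mathcal M,\ti{\mathcal L}$ gives the stated representation.

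The step I expect to be the real obstacle is the bookkeeping in the matching argument. With $3\times 3$ matrices, two families of kernels, and the products $V_1\Phi$, $V_1^2\Phi$, $V_{1x}\Phi$ to be carried through the integration by parts, it is easy to misplace a factor; getting every $\tfrac{i}{2}$, every sign tied to the column index, and in particular the cubic and derivative terms into the correct entries of \eqref{MGoursat}--\eqref{LGoursat} requires care. A practical way to organize the work is to keep everything in the matrix form $\ti L+\tfrac{i}{2}\Lam V_1(0,t)M+kM$, treating $\hat\Lam$, $V_1(0,t)$, $V_1^2(0,t)$, $V_{1x}(0,t)$ and the integration by parts as matrix identities, and to pass to individual entries only at the end.
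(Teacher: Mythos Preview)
Your proposal is correct and follows essentially the same route as the paper: substitute a kernel ansatz into the $t$-part of the Lax pair, use the integration-by-parts identity $2ik^2\int F(t,s)e^{2ik^2s\Lambda}\,ds=[F e^{2ik^2s\Lambda}]_{-t}^{t}\Lambda-\int F_s e^{2ik^2s\Lambda}\,ds\,\Lambda$ to reduce powers of $k$, and match coefficients to obtain the Goursat system. The only differences are organizational --- the paper works with $\Psi$ and an auxiliary kernel $L$, then makes the change of variable $L=\tilde L-\tfrac{i}{2}V_2^{(1)}\Lambda M$ and passes to $\mu$ at the end, whereas you start directly from $\mu_2(0,t,k)$ with the integrand already written as $\tilde L+\tfrac{i}{2}\Lambda V_1(0,t)M+kM$ (the same object, since $\Lambda V_1=-V_1\Lambda$) --- and the paper leaves the converse step (solvability of the Goursat problem and identification with $\Phi$ by ODE uniqueness) implicit while you spell it out.
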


%Prove the GLM
\begin{proof}

Assume that
\be\label{PsiGlm}
\Psi(t,k)=e^{2ik^2\Lam t}+\int_{-t}^{t}\left(L(t,s)+k M(t,s)\right)e^{2ik^2s\Lam} ds,
\ee
where $L$ and $M$ are $3\times 3$ matrices. Substituting the above equation (\ref{PsiGlm}) into the $t-part$ of the Lax pair (\ref{Laxpair}) with the boundary condition $\Psi(0,k)=\id$, and noticing that the identity
\be
2ik^2 \int_{-t}^{t}F(t,s)e^{2ik^2s\Lam}ds=\left[F(t,t)e^{2ik^2t\Lam}-F(t,-t)e^{-2ik^2t\Lam}-\int_{-t}^{t}F_s(t,s)e^{2ik^2s\Lam}ds\right]\Lam
\ee
where $F(t,s)$ is a $3\times 3-$matrix-value function (this identity is derived directly by using integration by parts),
we find the following equations:
\begin{subequations}
\be
M(t,-t)+\Lam M(t,-t)\Lam =0.
\ee
\be
L(t,-t)+\Lam L(t,-t)\Lam -iV^{(1)}_2 M(t,-t)\Lam=0.
\ee
\end{subequations}

\begin{subequations}
\be
M(t,t)-\Lam M(t,t)\Lam =2V^{(1)}_2.
\ee
\be
L(t,t)-\Lam L(t,t)\Lam +iV^{(1)}_2 M(t,t)\Lam=V^{(0)}_2.
\ee
\end{subequations}

\begin{subequations}
\be
M_t(t,s)+\Lam M_s(t,s)\Lam =2V^{(1)}_2 L(t,s)+V^{(0)}_2 M(t,s).
\ee
\be
L_t(t,s)+\Lam L_s(t,s)\Lam =iV^{(1)}_2 M_s(t,s)\Lam+V^{(0)}_2 L(t,s).
\ee
\end{subequations}

Set
\be\label{tildeLdef}
L(t,s)=\tilde L(t,s)-\frac{i}{2}V^{(1)}_2 \Lam M(t,s).
\ee

Then, we find

\begin{subequations}\label{MtildeL-tint}
\be
M(t,-t)+\Lam M(t,-t)\Lam =0.
\ee
\be
\tilde L(t,-t)+\Lam \tilde L(t,-t)\Lam=0.
\ee
\end{subequations}

\begin{subequations}\label{MtildeLtint}
\be
M(t,t)-\Lam M(t,t)\Lam =2V^{(1)}_2.
\ee
\be
\tilde L(t,t)-\Lam \tilde L(t,t)\Lam= -i\Lam V^{(1)}_{2x}.
\ee
\end{subequations}

\begin{subequations}\label{MtildeLequs}
\be
M_t(t,s)+\Lam M_s(t,s)\Lam =2V^{(1)}_2 \tilde L(t,s)-(i(V^{(1)}_2)^2\Lam-V^{(0)}_2) M(t,s).
\ee
\be
\ba{rl}
\tilde L_t(t,s)+\Lam \tilde L_s(t,s)\Lam =&(V^{(0)}_2+iV^{(1)}_2\Lam V^{(1)}_2)\tilde L(t,s)\\
&+\frac{i}{2}(\dot{V^{(1)}_2} \Lam-iV^{(1)}_2\Lam (V^{(1)}_2)^2\Lam +V^{(1)}_2 \Lam V^{(0)}_2-V^{(0)}_2 V^{(1)}_2 \Lam)M(t,s).
\ea
\ee
where the $\dot{f}$ denotes that $\frac{df}{dt}$.
\end{subequations}

Recalling that the definition of $V^{(0)}_2=i\Lam((V^{(1)}_2)^2-V^{(1)}_{2x})$, we can write the equation (\ref{MtildeLequs}) as
\begin{subequations}\label{MtildeLequs'}
\be
M_t(t,s)+\Lam M_s(t,s)\Lam =2V^{(1)}_2 \tilde L(t,s)-i\Lam V^{(1)}_{2x} M(t,s).
\ee
\be
\ba{rl}
\tilde L_t(t,s)+\Lam \tilde L_s(t,s)\Lam =&-i\Lam V^{(1)}_{2x}\tilde L(t,s)\\
&+\frac{i}{2}(\dot{V^{(1)}_2} \Lam-i(V^{(1)}_2)^3+iV^{(1)}_{2x}V^{(1)}_2-iV^{(1)}_2 V^{(1)}_{2x})M(t,s).
\ea
\ee
\end{subequations}
If we denote the matrices $M(t,s)$ and $\tilde L(t,s)$ by
\be
M(t,s)=\left(\ba{ccc}M_{11}(t,s)&M_{12}(t,s)&M_{13}(t,s)\\M_{21}(t,s)&M_{22}(t,s)&M_{23}(t,s)\\M_{31}(t,s)&M_{32}(t,s)&M_{33}(t,s)\ea\right),
\quad
\tilde L(t,s)=\left(\ba{ccc}\tilde L_{11}(t,s)&\tilde L_{12}(t,s)&\tilde L_{13}(t,s)\\\tilde L_{21}(t,s)&\tilde L_{22}(t,s)&\tilde L_{23}(t,s)\\\tilde L_{31}(t,s)&\tilde L_{32}(t,s)&\tilde L_{33}(t,s)\ea\right)
\ee
By the equations (\ref{MtildeL-tint}), (\ref{MtildeLtint}) and (\ref{MtildeLequs'}), we can get the initial conditions (\ref{Goursatintdata}) and the ODE systems (\ref{MGoursat}), (\ref{LGoursat}), respectively.

Finally, noticing that the relation $\mu(0,t,k)=\Psi(t,k)e^{2ik^2t\Lam}$ and the definition (\ref{Phidef}) of $\{\Phi_{ij}(t,k)\}_{i,j=1}^{3}$, we can get the GLM representations (\ref{PhiGLM}) from the equations (\ref{PsiGlm}) and (\ref{tildeLdef}). Similarly to prove the results for $\{\phi_{ij}(t,k)\}_{i,j=1}^{3}$.
\end{proof}

\section{The solution of the global relation}

In this section, we consider the solution of the global relation. Theorems \ref{maintheorem} below leads to expressions
for the Dirichlet-to-Neumann map in terms of the GLM representations.

\subsection{The Dirichlet-to-Neumann Map}

In order to simplify our formulas, we use some notations as following:

\begin{itemize}
\item
For a function $f(t,s)$, we let $\hat f(t,k)$ denote the transform
\be
\hat f(t,k)=\int_{-t}^{t}f(t,s)e^{2ik^2(s-t)}ds.
\ee
\item The functions $f_+(k)$ and $f_-(k)$ denote the following even and odd combinations of the function $f(k)$ :
\[
f_+(k)=f(k)+f(-k),\quad f_-(k)=f(k)-f(-k),\quad k\in \C.
\]

\item $\Dta(k)$ and $\Sig(k)$ are defined by
\[
\Dta(k)=e^{2ikL}-e^{-2ikL},\quad \Sig (k)=e^{2ikL}+e^{-2ikL}.
\]
\end{itemize}
If we partition the $3\times 3$ matrix $A=(A_{ij})_{i,j=1}^{3}$ by $A=\left(\ba{cc}A_{11}&A_{1j}\\A_{j1}&A_{2\times 2}\ea\right),j=2,3$, and denote $g_0$ and $f_0$ as two component row vectors by $g_0=\left(\ba{cc}g_{01}(t)&g_{02}(t)\ea\right)$ and $f_0=\left(\ba{cc}f_{01}(t)&f_{02}(t)\ea\right)$, respectively, then the GLM representations of $\{\Phi_{ij},\phi_{ij}\}$ can be written as
\begin{subequations}\label{GLMFourier}
\be
\left\{
\ba{ll}
\Phi_{11}=1+\ol{\widehat{\bar{\tilde L}}}_{11}-\frac{i}{2}g_{0}\ol{\widehat{\bar M}}_{j1}+k\ol{\widehat{\bar M}}_{11},&
\Phi_{1j}=\widehat{\tilde L}_{1j}-\frac{i}{2}g_{0}\hat M_{2\times 2}+k\hat M_{1j}\\
\Phi_{j1}=\ol{\widehat{\bar{\tilde L}}}_{j1}+\frac{i}{2}\sig \bar g^T_{0}\ol{\widehat{\bar M}}_{11}+k\ol{\widehat{\bar M}}_{j1},&
\Phi_{2\times 2}=\id+\widehat{\tilde L}_{2\times 2}+\frac{i}{2}\sig \bar g^T_{0}\hat M_{1j}+k\hat M_{2\times 2}.\\
\ea
\right.
\ee

\be
\left\{
\ba{ll}
\phi_{11}=1+\ol{\widehat{\bar{\tilde{\mathcal L}}}}_{11}-\frac{i}{2}f_{0}\ol{\widehat{\bar{\mathcal{M}}}}_{j1}+k\ol{\widehat{\bar{\mathcal{M}}}}_{11},&
\phi_{1j}=\widehat{\tilde{\mathcal{L}}}_{1j}-\frac{i}{2}f_{0}\hat{\mathcal{M}}_{2\times 2}+k\hat{\mathcal{M}}_{1j}\\
\phi_{j1}=\ol{\widehat{\bar{\tilde{\mathcal{L}}}}}_{j1}+\frac{i}{2}\sig \bar f^T_{0}\ol{\widehat{\bar{\mathcal{M}}}}_{11}+k\ol{\widehat{\bar{\mathcal{M}}}}_{j1},&
\phi_{2\times 2}=\id+\widehat{\tilde{\mathcal{L}}}_{2\times 2}+\frac{i}{2}\sig \bar f^T_{0}\hat{\mathcal{M}}_{1j}+k\hat{\mathcal{M}}_{2\times 2}.\\
\ea
\right.
\ee
\end{subequations}
where $\ol{\widehat{\bar M}}$ is short-hand notation for $\ol{\widehat{\bar M(t,\bar k)}}$ etc.

\begin{theorem}\label{maintheorem}
Let $T<\infty$. Let $q_{10}(x)=q_{20}(x)=0,0\le x\le L$, be two vanishing initial data.
\par
For the Dirichlet problem it is assumed that the functions $\{g_{01}(t),g_{02}(t)\},0\le t<T$, have sufficient smoothness and are compatible with $\{q_{10}(x),q_{20}(x)\}$ at $x=t=0$, that is
\[
q_{10}(0)=g_{01}(0),\quad q_{20}(0)=g_{02}(0).
\]
The functions $\{f_{01}(t),f_{02}(t)\},0\le t<T$, have sufficient smoothness and are compatible with $\{q_{10}(x),q_{20}(x)\}$ at $x=L$, that is,
\[
q_{10}(L)=f_{01}(0),\quad q_{20}(L)=f_{02}(0).
\]
\par
For the Neumann problem it is assumed that the functions $\{g_{11}(t),g_{12}(t)\},0\le t<T$, have sufficient smoothness and are compatible with $\{q_{10}(x),q_{20}(x)\}$ at $x=t=0$; the functions $\{f_{11}(t),f_{12}(t)\},0\le t<T$, have sufficient smoothness and are compatible with $\{q_{10}(x),q_{20}(x)\}$ at $x=L$.
\par
Then the spectral functions $\{S(k),S_L(k)\}$ are given by
\begin{subequations}\label{SSLK}
\be\label{SK}
S(k)=\left(\ba{ccc}
\ol{\Phi_{11}(\bar k)}&-\sig \ol{\Phi_{21}(\bar k)}e^{4ik^2T}&-\sig \ol{\Phi_{31}(\bar k)}e^{4ik^2T}\\
-\sig \ol{\Phi_{12}(\bar k)}e^{-4ik^2T}&\ol{\Phi_{22}(\bar k)}&\ol{\Phi_{32}(\bar k)}\\
-\sig \ol{\Phi_{13}(\bar k)}e^{-4ik^2T}&\ol{\Phi_{23}(\bar k)}&\ol{\Phi_{33}(\bar k)}\ea\right)
\ee
\be\label{SLk}
S_L(k)=\left(\ba{ccc}
\ol{\phi_{11}(\bar k)}&-\sig \ol{\phi_{21}(\bar k)}e^{4ik^2T}&-\sig \ol{\phi_{31}(\bar k)}e^{4ik^2T}\\
-\sig \ol{\phi_{12}(\bar k)}e^{-4ik^2T}&\ol{\phi_{22}(\bar k)}&\ol{\phi_{32}(\bar k)}\\
-\sig \ol{\phi_{13}(\bar k)}e^{-4ik^2T}&\ol{\phi_{23}(\bar k)}&\ol{\phi_{33}(\bar k)}\ea\right)
\ee
\end{subequations}
and the complex-value functions $\{\Phi_{ij}(t,k)\}_{i,j=1}^{3}$ satisfy the GLM representations defined as (\ref{PhiGLM}), the complex-value functions $\phi_{ij}(t,k)\}_{i,j=1}^{3}$ satisfy the similar GLM representations as (\ref{PhiGLM}).

Define the functions $\{F_{1j}(t,k),\mathcal{F}_{1j}(t,k)\},j=2,3$ by
\begin{subequations}
\be\label{F1jdef}
\ba{rl}
F_{1j}=&-\frac{i}{2}g_0 \hat M_{2\times 2}+\frac{i}{2}f_0\widehat{\bar{\mathcal{M}}}_{11}e^{2ikL}\\
&{}+\left(\widehat{\tilde L}_{1j}-\frac{i}{2}g_{0}\hat M_{2\times 2}+k\hat M_{1j}\right)\left(\ol{\widehat{\tilde{\mathcal{L}}}}^{T}_{2\times 2}-\frac{i}{2}\sig \ol{\hat{\mathcal{M}}}^{T}_{1j}f_{0}+k\ol{\hat{\mathcal{M}}}^{T}_{2\times 2}\right)\\
&{}-\sig \left(\ol{\widehat{\bar{\tilde L}}}_{11}-\frac{i}{2}g_{0}\ol{\widehat{\bar M}}_{j1}+k\ol{\widehat{\bar M}}_{11}\right) \left(\widehat{\bar{\tilde{\mathcal{L}}}}^{T}_{j1}-\frac{i}{2}\sig f_{0}\widehat{\bar{\mathcal{M}}}_{11}+k\widehat{\bar{\mathcal{M}}}^{T}_{j1}\right)
%\Phi_{1j}(\bar \phi_{2\times 2}^T-\id)-\sig(\Phi_{11}-1)\bar \phi_{j1}^Te^{2ikL}.
\ea
\ee
\be\label{F1jdefmathcal}
\ba{rl}
\mathcal{F}_{1j}=&\frac{i}{2}f_0 \widehat{\mathcal{M}}_{2\times 2}-\frac{i}{2}g_0\hat{\bar{M}}_{11}e^{-2ikL}\\
&-\left(\widehat{\tilde{\mathcal{L}}}_{1j}-\frac{i}{2}f_{0}\hat{\mathcal{M}}_{2\times 2}+k\hat{\mathcal{M}}_{1j}\right)\left(\ol{\widehat{\tilde L}}^{T}_{2\times 2}-\frac{i}{2}\sig \ol{\hat M}^{T}_{1j}g_{0}+k\ol{\hat M}^{T}_{2\times 2}\right)\\
&{}+\sig \left(\ol{\widehat{\bar{\tilde{\mathcal L}}}}_{11}-\frac{i}{2}f_{0}\ol{\widehat{\bar{\mathcal{M}}}}_{j1}+k\ol{\widehat{\bar{\mathcal{M}}}}_{11}\right)\left(\widehat{\bar{\tilde L}}^{T}_{j1}-\frac{i}{2}\sig g_{0}\widehat{\bar M}_{11}+k\widehat{\bar M}^{T}_{j1}\right)e^{-2ikL}
%-\phi_{1j}(\bar \Phi_{2\times 2}^T-\id)+\sig(\phi_{11}-1)\bar \Phi_{j1}^Te^{-2ikL}.
\ea
\ee
\end{subequations}
Under the vanishing intial value assumption, the following formulas are valid:

\begin{enumerate}
  \item For the Dirichlet problem, the unknown boundary values $g_1=\left(g_{11}(t)\quad g_{12}(t)\right)$ and $f_1=\left(f_{11}(t)\quad f_{12}(t)\right)$ are given by

      \begin{subequations}
      \be\label{g1rep}
      \ba{rcl}
      g_{1}&=&\frac{4}{i\pi}\int_{\pt D^{0}_1}\left\{\frac{k^2 \Sig}{\Dta}\left[\hat M_{1j}-\frac{g_0}{2ik^2}\right]-\frac{2k^2\sig}{\Dta}\left[\widehat{\bar{\mathcal{M}}}^{T}_{j1}-\frac{\sig g_0}{2ik^2}\right]+\frac{ikg_0}{2}\hat M_{2\times 2}\right.\\
      &&{}\left. +\frac{kg_0}{2}\ol{\hat{\mathcal{M}}}^{T}_{2\times 2}+\frac{k}{\Dta}\left(e^{-2ikL}F_{1j}\right)_-\right\}dk
      \ea
      \ee
      \be\label{f1rep}
      \ba{rcl}
      f_{1}&=&\frac{4}{i\pi}\int_{\pt D^{0}_1}\left\{-\frac{k^2 \Sig}{\Dta}\left[\widehat{\mathcal{M}}_{1j}-\frac{f_0}{2ik^2}\right]+\frac{2k^2\sig}{\Dta}\left[\widehat{\bar{M}}^{T}_{j1}-\frac{\sig f_0}{2ik^2}\right]+\frac{ikf_0}{2}\hat{\mathcal{M}}_{2\times 2}\right.\\
      &&{}\left. +\frac{kf_0}{2}\ol{\hat{M}}^{T}_{2\times 2}+\frac{k}{\Dta}\left(e^{2ikL}\mathcal{F}_{1j}\right)_-\right\}dk
      \ea
      \ee
      \end{subequations}

  \item For the Neumann problem, the unknown boundary values $g_0=\left(g_{01}(t)\quad g_{02}(t)\right)$ and $f_0=\left(f_{01}(t)\quad f_{02}(t)\right)$ are given by

      \begin{subequations}\label{g0f0rep}
      \be\label{g0rep}
      g_0=\frac{2}{\pi}\int_{\pt D^{0}_1}\frac{1}{\Dta}\left\{\Sig \hat{\tilde L}_{1j}-2\widehat{\tilde{\mathcal{L}}}_{1j}+\left(e^{-2ikL}F_{1j}\right)_+\right\}dk
      \ee
      \be\label{f0rep}
      f_0=\frac{2}{\pi}\int_{\pt D^{0}_1}\frac{1}{\Dta}\left\{-\Sig \widehat{\tilde{\mathcal{L}}}_{1j}+2\sig\hat{\tilde L}_{1j}+\left(e^{2ikL}\mathcal{F}_{1j}\right)_+\right\}dk
      \ee

      \end{subequations}

\end{enumerate}
\end{theorem}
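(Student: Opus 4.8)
\emph{Proof proposal.} The plan is to run the standard Fokas Dirichlet-to-Neumann construction with every eigenfunction entry replaced by its GLM representation \eqref{PhiGLM} (equivalently, the transform form \eqref{GLMFourier}), and then to invert the resulting transforms by contour integration over $\pt D^{0}_1$. A preliminary step is to establish \eqref{SSLK}: these are not extra hypotheses but consequences of the definitions \eqref{Smu1}, \eqref{SLmu4} and the symmetry \eqref{symm}. Writing $S(k)=e^{-2ik^2T\hat\Lam}\mu_2^{-1}(0,T,k)=e^{-2ik^2T\hat\Lam}A\,\ol{\mu_2(0,T,\bar k)}^{T}A$ with $A=\mathrm{diag}(1,-\sig,-\sig)$ and inserting \eqref{Phidef} reproduces \eqref{SK}, the factors $e^{\pm 4ik^2T}$ coming from conjugation by $e^{-2ik^2T\hat\Lam}$; the identity for $S_L(k)$ is the same computation with $\mu_3(L,T,k)$ and \eqref{phidef}.

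Next I would collapse the global relation. Because the initial data vanish, the spectral function $s(k)$, being built from $q_{10},q_{20}$ through \eqref{mujdef}, equals $\id$, so \eqref{globalrel} reduces to $S^{-1}(k)e^{-ikL\hat\Lam}S_L(k)=e^{-2ik^2T\hat\Lam}c(T,k)$ with $c(T,k)=\mu_4(0,T,k)$. I would rewrite this as $e^{-ikL\hat\Lam}S_L(k)=S(k)e^{-2ik^2T\hat\Lam}c(T,k)$ and read off the $(1,2)$ and $(1,3)$ entries, together with the $(2,1)$ and $(3,1)$ entries of the relation obtained by applying \eqref{symm} once more. The decisive structural input — extracted from the contour $\gam_4$ in Figure~\ref{fig-1}, the Volterra equation \eqref{mujdef}, and the boundedness table for $S_L$ — is that the particular scalar combinations of the entries of $c(T,k)$ that survive in these equations are analytic in the open first quadrant $D_1$, continuous up to $\pt D_1$, and $O(1/k)$ there once the relevant exponential is removed; hence their $\pt D^{0}_1$-integrals vanish and they drop out of the final formulas. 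Note that the zeros of $\Dta(k)=2i\sin 2kL$ lie on $\R\subset\pt D_1$, not in the interior, which is why the indented contour $\pt D^{0}_1$ is the natural one.

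Substituting \eqref{GLMFourier} into the component equations of the previous step produces, for the Dirichlet problem, a \emph{linear} system for the unknown transforms $\widehat{\tilde L}_{1j}$ and $\widehat{\tilde{\mathcal L}}_{1j}$, which carry the sought Neumann data through $\tilde L_{1j}(t,t)=\tfrac{i}{2}g_{1j}(t)$ and $\tilde{\mathcal L}_{1j}(t,t)=\tfrac{i}{2}f_{1j}(t)$ from \eqref{Goursatintdata}; the coefficients involve only $\Dta(k)$, $\Sig(k)$ and the \emph{known} Dirichlet transforms $\hat M_{ij}$, $\ol{\widehat{\bar M}}_{ij}$. All quadratic cross-terms generated by the products $S\cdot c$ and by $S^{-1}$ collect precisely into the auxiliary functions $F_{1j},\mathcal F_{1j}$ of \eqref{F1jdef}, \eqref{F1jdefmathcal}; verifying this repackaging is the bulk of the work. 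Solving the linear algebra gives, schematically, $\widehat{\tilde L}_{1j}(t,k)=(\text{known})/\Dta(k)+(\text{a function analytic in }D_1)$, and likewise for $\widehat{\tilde{\mathcal L}}_{1j}$. For the Neumann problem the roles are interchanged: the unknowns are the Dirichlet transforms $\hat M_{1j}$ (carrying $M_{1j}(t,t)=g_{0j}(t)$), and the same elimination is run with $\hat{\tilde L}_{1j}$ now given.

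Finally, integrating the resulting identities over $\pt D^{0}_1$, the $D_1$-analytic pieces contribute nothing, the $c(T,k)$-contributions vanish by the second step, and the remaining pieces are inverted by the elementary identities for the transform $\hat f(t,k)=\int_{-t}^{t}f(t,s)e^{2ik^2(s-t)}ds$ — namely that suitable $\pt D^{0}_1$-integrals of $k^{n}\hat f(t,k)$ recover $f(t,t)$ and $\int_{-t}^{t}f(t,s)\,ds$ — together with the even/odd splitting $(\cdot)_{\pm}$, which yields \eqref{g1rep}, \eqref{f1rep} for the Dirichlet problem and \eqref{g0rep}, \eqref{f0rep} for the Neumann problem. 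The main obstacle is the bookkeeping of the middle two steps: expanding the $3\times 3$ matrix relation in components, correctly identifying which scalar combinations of $c(T,k)$ are $D_1$-analytic and decaying, and checking that the quadratic terms collapse exactly into the stated $F_{1j},\mathcal F_{1j}$; once this is arranged, the transform inversion is routine. The asserted equivalence with the spectral-domain formulas of \cite{jf3} then follows by substituting the GLM representations \eqref{PhiGLM} into the Section~4 formulas of \cite{jf3} and matching the two expressions term by term.
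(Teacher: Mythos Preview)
Your plan is correct through the first three steps: establishing \eqref{SSLK} from the symmetry \eqref{symm}, reducing the global relation via $s(k)=\id$, substituting the GLM forms \eqref{GLMFourier}, and solving the resulting linear system for the unknown transforms (this is exactly what the paper does, using the $k\mapsto -k$ substitution to produce a $2\times 2$ system whose solution brings in $\Sig/\Dta$ and $1/\Dta$).

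The gap is in your final step, where you call the transform inversion ``routine.'' It is not. After solving for $\hat{\tilde L}_{1j}$, the right-hand side contains terms like $\frac{k\Sig}{\Dta}\hat M_{1j}$ and, inside $F_{1j}$, products such as $k^2\hat M_{1j}\,\ol{\hat{\mathcal M}}^{T}_{2\times 2}$. Multiplying by $k$ (to recover $\tilde L_{1j}(t,t)$ via the identity $\int_{\pt D_1}k\hat f\,dk=\tfrac{\pi}{4}f(t,t)$) produces integrands that do \emph{not} decay in $D_1$; in particular $\hat M_{1j}\sim g_0/(2ik^2)$ as $k\to\infty$, so $k^2\hat M_{1j}$ is $O(1)$ and $k^3\hat M_{1j}\,\ol{\hat{\mathcal M}}^{T}_{2\times 2}$ is $O(1)$. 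The paper handles this by inserting a regularizing factor $e^{4ik^2(t-t')}$ with $0<t'<t$, integrating, and only then sending $t'\uparrow t$. The crucial point --- and the content of the paper's Lemma~\ref{DtoNlemma} --- is that this limit does \emph{not} commute with the $\pt D^{0}_1$-integration for the $F_{1j}$ term: the interchange produces extra contributions, precisely the terms $\tfrac{ikg_0}{2}\hat M_{2\times 2}$ and $\tfrac{kg_0}{2i}\ol{\hat{\mathcal M}}^{T}_{2\times 2}$ that appear in \eqref{g1rep}. Without this regularize-then-limit device and the accompanying lemma, you would either write down divergent integrals or obtain the formula with those two terms missing. (For the Neumann case the analogous limit \emph{does} commute, which is why \eqref{g0rep}--\eqref{f0rep} look simpler; but this also has to be checked.) So the ``main obstacle'' is not the algebraic bookkeeping you flag, but this analytic step.
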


\begin{proof}
The expressions (\ref{SSLK}) of $S(k)$ and $S_L(k)$ are similarly proved as \cite{jf3}. Let us first consider the Dirichlet problem to prove the equation (\ref{g1rep}). Noticing that the global relation (\ref{globalrel}) under the vanishing initial value assumption can be written as
\begin{subequations}\label{globalzeroini}
\be\label{globalc1j}
c_{1j}(t,k)=\Phi_{1j}(t,k)\ol{\phi_{2\times 2}(t,\bar k)}-\sig \Phi_{11}(t,k)\ol{\phi_{j1}(t,\bar k)}^{T}e^{2ikL},
\ee
\be\label{globalcj1}
c_{j1}(t,k)=\Phi_{j1}(t,k)\ol{\phi_{11}(t,\bar k)}-\sig \Phi_{2\times 2}(t,k)\ol{\phi_{1j}(t,\bar k)}^{T}e^{-2ikL}.
\ee

\end{subequations}
In view of the GLM representations (\ref{GLMFourier}), the global relation (\ref{globalc1j}) can be written as
\be\label{L1jglo}
-\hat{\tilde L}_{1j}+\sig \widehat{\bar{\tilde{\mathcal{L}}}}^{T}_{j1}e^{2ikL}=k\hat M_{1j}-k\sig \widehat{\bar{\mathcal{M}}}^{T}_{j1}e^{2ikL}+F_{1j}(k)-c_{1j}(k),
\ee
where
\be\label{F1jexpress}
F_{1j}=-\frac{i}{2}g_0 \hat M_{2\times 2}+\frac{i}{2}f_0\widehat{\bar{\mathcal{M}}}_{11}e^{2ikL}+\Phi_{1j}(\bar \phi_{2\times 2}^T-\id)-\sig(\Phi_{11}-1)\bar \phi_{j1}^Te^{2ikL}.
\ee
The expression of $F_{1j}$ can be expressed as in (\ref{F1jdef}). Letting $k\rightarrow -k$ in (\ref{L1jglo}), we find
\be\label{L1jglo2}
-\hat{\tilde L}_{1j}+\sig \widehat{\bar{\tilde{\mathcal{L}}}}^{T}_{j1}e^{-2ikL}=-k\hat M_{1j}+k\sig \widehat{\bar{\mathcal{M}}}^{T}_{j1}e^{-2ikL}+F_{1j}(-k)-c_{1j}(-k).
\ee
Solving (\ref{L1jglo}) and (\ref{L1jglo2}) for $\hat{\tilde L}_{1j}$, we find
\be
-\hat{\tilde L}_{1j}=\frac{2k\sig}{\Dta}\widehat{\bar{\mathcal{M}}}^{T}_{j1}-\frac{k\Sig}{\Dta}\hat M_{1j}-\frac{1}{\Dta}\left(e^{-2ikL}(F_{1j}-c_{1j})\right)_-
\ee

Multiplying this equation by $ke^{4ik^2(t-t')}$, $0<t'<t$, and integrating along $\pt D^{0}_1$ with respect to $dk$, we obtain
\be\label{L1jexp}
\ba{rl}
-\int_{\pt D^{0}_1}ke^{4ik^2(t-t')}\hat{\tilde L}_{1j}dk=&\int_{\pt D^{0}_1}\frac{2k^2\sig}{\Dta}e^{4ik^2(t-t')}\widehat{\bar{\mathcal{M}}}^{T}_{j1}dk-\int_{\pt D^{0}_1}\frac{k^2\Sig}{\Dta}e^{4ik^2(t-t')}\hat M_{1j}dk\\
&{}-\int_{\pt D^{0}_1}\frac{k}{\Dta}\left(e^{-2ikL}F_{1j}\right)_-dk
\ea
\ee
where we have used that the function $\frac{k}{\Dta}(e^{-2ikL}c_{1j})_-$ is bounded and analytic in $D^{0}_1$ so that its contributions vanish by Jordan's lemma.

The next is to take limit $t'\uparrow t$ in (\ref{L1jexp}). This can be achieved by using the identities
\be\label{kiden}
\int_{\pt D_1}ke^{4ik^2(t-t')}\hat f(t,k)dk=\left\{
\ba{ll}
\frac{\pi}{2}f(t,2t'-t),&0<t'<t,\\
\frac{\pi}{4}f(t,t),&0<t'=t,
\ea
\right.
\ee
and
\be\label{k2iden}
\ba{l}
\int_{\pt D^{0}_1}k^2e^{4ik^2(t-t')}\hat f(t,k)dk=\\
{}2\int_{\pt D^{0}_1}k^2\left[\int_{0}^{t'}e^{4ik^2(\tau-t')}f(t,2\tau-t)d\tau-\frac{f(t,2t'-t)}{4ik^2}\right]dk,\quad 0<t'<t.
\ea
\ee
The identity (\ref{k2iden}) is also valid if $k^2$ is replaced by $\frac{k^2}{\Dta}$ or $\frac{k^2\Sig}{\Dta}$ or $k^3$. Utilizing these identities
in (\ref{L1jexp}), we find
\be
\ba{rl}
-\frac{\pi}{2}\tilde L_{1j}(t,2t'-t)=&4\int_{\pt D^{0}_1}\frac{k^2\sig}{\Dta}\left[\int_{0}^{t'}\ol{\mathcal{M}}^{T}_{j1}(t,2\tau-t)e^{4ik^2(\tau-t')}d\tau-\frac{\ol{\mathcal{M}}^{T}_{j1}(t,2t'-t)}{4ik^2}\right]dk\\
&-2\int_{\pt D^{0}_1}\frac{k^2\Sig}{\Dta}\left[\int_{0}^{t'}M_{1j}(t,2\tau-t)e^{4ik^2(\tau-t')}d\tau-\frac{M_{1j}(t,2t'-t)}{4ik^2}\right]dk\\
&-\int_{\pt D^{0}_1}\frac{k}{\Dta}e^{4ik^2(t-t')}(e^{-2ikL}F_{1j})_-dk
\ea
\ee
Letting $t'\uparrow t$ in this equation and using the initial conditions (\ref{Goursatintdata}) as well as the following lemma, we find the representation in (\ref{g1rep}).

To prove the equation (\ref{f1rep}), we use the global relation (\ref{globalcj1}). Noticing that it can be written as
\be \label{mathcalF1j}
\widehat{\tilde{\mathcal{L}}}_{1j}-\sig \hat{\bar{\tilde{L}}}^{T}_{j1}e^{-2ikL}=-k\hat{\mathcal{M}}_{1j}+k\sig \hat{\bar{M}}^{T}_{j1}e^{-2ikL}+\mathcal{F}_{1j}-\sig \bar{c}^{T}_{j1}e^{-2ikL}.
\ee
Letting $k\rightarrow -k$ in (\ref{mathcalF1j}) we can get a new equation and solving these two equations for $\widehat{\mathcal{L}}_{1j}$.
\be
\widehat{\tilde{\mathcal{L}}}_{1j}=\frac{2k\sig}{\Dta}\hat{\bar{M}}^{T}_{j1}-\frac{k\Sig}{\Dta}\hat{\mathcal{M}}_{1j}+\frac{1}{\Dta}\left(e^{2ikL}(\mathcal{F}_{1j}-\sig \bar{c}^{T}_{j1}e^{-2ikL})\right)_-
\ee
Similar to the process of the proof of (\ref{g1rep}), we find that we also need the following lemma to get the representation (\ref{f1rep}).

\begin{lemma}\label{DtoNlemma}
\begin{subequations}
\be\label{F1jlemma}
\ba{rl}
\lim_{t'\uparrow t}\int_{\pt D^{0}_1}\frac{k}{\Dta}e^{4ik^2(t-t')}(e^{-2ikL}F_{1j})_-dk=&\int_{\pt D^{0}_1}\frac{k}{\Dta}(e^{-2ikL}F_{1j})_-dk\\
+\frac{ig_0}{2}\int_{\pt D^{0}_1}k\hat M_{2\times 2}dk&+\frac{g_0}{2i}\int_{\pt D^{0}_1}k\ol{\hat{\mathcal{M}}}^{T}_{2\times 2}dk
\ea
\ee
\be\label{mathcalF1jlemma}
\ba{rl}
\lim_{t'\uparrow t}\int_{\pt D^{0}_1}\frac{k}{\Dta}e^{4ik^2(t-t')}(e^{2ikL}\mathcal{F}_{1j})_-dk=&\int_{\pt D^{0}_1}\frac{k}{\Dta}(e^{2ikL}\mathcal{F}_{1j})_-dk\\
+\frac{if_0}{2}\int_{\pt D^{0}_1}k\hat{\mathcal{M}}_{2\times 2}dk&+\frac{f_0}{2i}\int_{\pt D^{0}_1}k\ol{\hat{M}}^{T}_{2\times 2}dk
\ea
\ee
\end{subequations}
\end{lemma}

\begin{proof}
The proof is similar to \cite{fl3}. And we prove (\ref{F1jlemma}), the proof of (\ref{mathcalF1jlemma}) is similar.
We write
\be\label{F1jlimit}
\ba{l}
\int_{\pt D^{0}_1}\frac{k}{\Dta}e^{4ik^2(t-t')}(e^{-2ikL}F_{1j})_-dk=\int_{\pt D^{0}_1}ke^{4ik^2(t-t')}\frac{ig_0}{2}\hat M_{2\times 2}dk\\
{}-\int_{\pt D^{0}_1}ke^{4ik^2(t-t')}\left(\hat{\tilde L}_{1j}-\frac{ig_0}{2}\hat M_{2\times 2}\right)\left(\ol{\hat{\tilde L}}^{T}_{2\times 2}-\frac{i\sig}{2}\ol{\hat{\mathcal{M}}}^{T}_{1j}f_0\right)dk\\
{}+\int_{\pt D^{0}_1}\frac{k^2\Sig}{\Dta}e^{4ik^2(t-t')}\left[\hat M_{1j}\left(\ol{\hat{\tilde{\mathcal{L}}}}^{T}_{2\times 2}-\frac{i\sig}{2}\ol{\hat{\mathcal{M}}}^{T}_{1j}f_0\right)+\left(\hat{\tilde L}_{1j}-\frac{ig_0}{2}\hat M_{2\times 2}\right)\ol{\hat{\mathcal{M}}}^{T}_{2\times 2}\right]dk\\
{}-\int_{\pt D^{0}_1}\frac{2k^2\sig}{\Dta}e^{4ik^2(t-t')}\left[\left(\ol{\hat{\bar{\tilde{L}}}}_{11}-\frac{ig_0}{2}\ol{\hat{\bar{M}}}_{j1}\right)\widehat{\bar{\mathcal{M}}}^{T}_{j1}+
\ol{\hat{\bar{M}}}_{11}\left(\widehat{\ol{\tilde{\mathcal{L}}}}^{T}_{j1}-\frac{i\sig f_0}{2}\widehat{\ol{\mathcal{M}}}_{11}\right)\right]dk\\
{}-\int_{\pt D^{0}_1}k^3e^{4ik^2(t-t')}\hat M_{1j}\ol{\hat{\mathcal{M}}}^{T}_{2\times 2}dk
\ea
\ee
The first integral on the right hand side of (\ref{F1jlimit}) yields the following contribution in the limit $t'\uparrow t$:
\[
\lim_{t'\uparrow t}\frac{ig_0}{2}\int_{\pt D^{0}_1}ke^{4ik^2(t-t')}\hat M_{2\times 2}dk=\lim_{t'\uparrow t}\frac{ig_0}{2}\frac{\pi}{2}M_{2\times 2}(t,2t'-t)=\frac{i\pi g_0(t)}{4}M_{2\times 2}(t,t).
\]
On the other hand, utilizing the second row of (\ref{kiden}),
\[
\frac{ig_0}{2}\int_{\pt D^{0}_1}k\hat M_{2\times 2}dk=\frac{i\pi g_0(t)}{8}M_{2\times 2}(t,t).
\]
Therefore,
\be\label{F1jlimi1term}
\lim_{t'\uparrow t}\frac{ig_0}{2}\int_{\pt D^{0}_1}ke^{4ik^2(t-t')}\hat M_{2\times 2}dk=\frac{ig_0}{2}\int_{\pt D^{0}_1}k\hat M_{2\times 2}dk+\frac{ig_0}{2}\int_{\pt D^{0}_1}k\hat M_{2\times 2}dk
\ee
The first term on the right hand side of (\ref{F1jlimi1term}) is the contribution obtained by taking the limit inside the integral; this term is included in the first term on the right hand side of (\ref{F1jlimit}). In addition to this term, there is also an additional term arising from the interchange of
the limit and the integration; this is the second term on the right hand side of (\ref{F1jlimit}).

We now consider the last integral on the right hand side of (\ref{F1jlimit}), which can be written as
\be\label{F1jlimitlast}
-\int_{\pt D^{0}_1}k^3e^{4ik^2(t-t')}\hat M_{1j}\ol{\hat{\mathcal{M}}}^{T}_{2\times 2}dk=-2\int_{\pt D^{0}_1}k^3\left(\int_{0}^{t}e^{4ik^2(\tau-t')}M_{1j}(t,2\tau-t)d\tau\right)\ol{\hat{\mathcal{M}}}^{T}_{2\times 2}dk
\ee
The right hand side of (\ref{F1jlimitlast}) equals
\be
-2\int_{\pt D^{0}_1}k^3\left(\int_{0}^{t'}e^{4ik^2(\tau-t')}M_{1j}(t,2\tau-t)d\tau-\frac{M_{1j}(t,2t'-t)}{4ik^2}\right)\ol{\hat{\mathcal{M}}}^{T}_{2\times 2}dk
\ee
Then taking the limit $t'\uparrow t$ in (\ref{F1jlimitlast}) and noticing the initial conditions (\ref{Goursatintdata}), we find
\be
-\lim_{t'\uparrow t}\int_{\pt D^{0}_1}k^3e^{4ik^2(t-t')}\hat M_{1j}\ol{\hat{\mathcal{M}}}^{T}_{2\times 2}dk=-\int_{\pt D^{0}_1}k^3\hat M_{1j}\ol{\hat{\mathcal{M}}}^{T}_{2\times 2}dk+\frac{g_0(t)}{2i}\int_{\pt D^{0}_1}k\ol{\hat{\mathcal{M}}}^{T}_{2\times 2}dk
\ee
The first term on the right hand side is the contribution obtained by taking the limit inside the
integral. In addition to this term, there is also an additional term arising from the
interchange of the limit and the integration; this is the third term on the right hand side of
(\ref{F1jlimit}).

Finally, we claim that the limits of the second, third, and fourth integrals on the
right hand side of (\ref{F1jlimit}) can be computed by simply taking the limit inside the integral, i.e. in
these cases no additional terms arise. In fact, the second integral is the direct result by taking the limit inside the integral. We show the claim for the term
\[
I=\int_{\pt D^{0}_1}\frac{k^2\Sig}{\Dta}e^{4ik^2(t-t')}\hat M_{1j}\left(\ol{\hat{\tilde{\mathcal{L}}}}^{T}_{2\times 2}-\frac{i\sig}{2}\ol{\hat{\mathcal{M}}}^{T}_{1j}f_0\right)dk
\]
the proofs for the other terms are similar. We have
\[
I=2\int_{\pt D^{0}_1}\frac{k^2\Sig}{\Dta}\left(\int_{0}^{t}e^{4ik^2(\tau-t')}M_{1j}(t,2\tau-t)d\tau\right)\left(\ol{\hat{\tilde{\mathcal{L}}}}^{T}_{2\times 2}-\frac{i\sig}{2}\ol{\hat{\mathcal{M}}}^{T}_{1j}f_0\right)dk
\]
This can be written as
\[
\footnotesize
I=2\int_{\pt D^{0}_1}\frac{k^2\Sig}{\Dta}\left(\int_{0}^{t'}e^{4ik^2(\tau-t')}M_{1j}(t,2\tau-t)d\tau-\frac{M_{1j}(t,2t'-t)}{4ik^2}\right)\left(\ol{\hat{\tilde{\mathcal{L}}}}^{T}_{2\times 2}-\frac{i\sig}{2}\ol{\hat{\mathcal{M}}}^{T}_{1j}f_0\right)dk
\]
\normalsize
Taking the limit $t'\uparrow t$, we find
\[
\ba{l}
\lim_{t'\uparrow t}\int_{\pt D^{0}_1}\frac{k^2\Sig}{\Dta}e^{4ik^2(t-t')}\hat M_{1j}\left(\ol{\hat{\tilde{\mathcal{L}}}}^{T}_{2\times 2}-\frac{i\sig}{2}\ol{\hat{\mathcal{M}}}^{T}_{1j}f_0\right)dk=\\
\int_{\pt D^{0}_1}\frac{k^2\Sig}{\Dta}\left(\hat{M}_{1j}(t,k)-\frac{g_0(t)}{2ik^2}\right)\left(\ol{\hat{\tilde{\mathcal{L}}}}^{T}_{2\times 2}-\frac{i\sig}{2}\ol{\hat{\mathcal{M}}}^{T}_{1j}f_0\right)dk
\ea
\]
However, in this case the additional term
\[
-\int_{\pt D^{0}_1}\frac{k^2\Sig}{\Dta}\frac{g_0(t)}{2ik^2}\left(\ol{\hat{\tilde{\mathcal{L}}}}^{T}_{2\times 2}-\frac{i\sig}{2}\ol{\hat{\mathcal{M}}}^{T}_{1j}f_0\right)dk
\]
vanishes because the integrand is analytic and goes to zero as $k\rightarrow \infty$ in $D_1$.

\end{proof}

{\bf Return to prove theorem \ref{maintheorem}.} We now consider the Neumann problem. Solving (\ref{L1jglo}) and (\ref{L1jglo2}) for $\hat M_{1j}$, we find
\be
k\hat M_{1j}=\frac{\Sig}{\Dta}\hat{\tilde{L}}_{1j}-\frac{2\sig}{\Dta}\widehat{\bar{\tilde{\mathcal{L}}}}^{T}_{j1}+\frac{1}{\Dta}\left(e^{-2ikL}(F_{1j}-c_{1j})\right)_+.
\ee
Similar to solve (\ref{mathcalF1j}) and related equation for $\hat{\mathcal{M}}_{1j}$, we find
\be
k\widehat{\mathcal{M}}_{1j}=-\frac{\Sig}{\Dta}\hat{\tilde{\mathcal{L}}}_{1j}+\frac{2\sig}{\Dta}\widehat{\bar{\tilde{{L}}}}^{T}_{j1}+\frac{1}{\Dta}\left(e^{2ikL}(\mathcal{F}_{1j}-\sig \bar{c}^{T}_{j1}e^{-2ikL})\right)_+.
\ee
Multiplying this equation by $e^{4ik^2(t-t')}$, $0<t'<t$, and integrating along $\pt D^{0}_1$ with respect to $dk$, we obtain
\begin{subequations}
\be
\frac{\pi}{2}M_{1j}(t,2t'-t)=\int_{\pt D^0_1}e^{4ik^2(t-t')}\left\{\frac{\Sig}{\Dta}\hat{\tilde{L}}_{1j}-\frac{2\sig}{\Dta}\widehat{\bar{\tilde{\mathcal{L}}}}^{T}_{j1}+\frac{1}{\Dta}\left(e^{-2ikL}F_{1j}\right)_+\right\}dk
\ee
\be
\frac{\pi}{2}\mathcal{M}_{1j}(t,2t'-t)=\int_{\pt D^0_1}e^{4ik^2(t-t')}\left\{-\frac{\Sig}{\Dta}\hat{\tilde{\mathcal{L}}}_{1j}+\frac{2\sig}{\Dta}\widehat{\bar{\tilde{{L}}}}^{T}_{j1}+\frac{1}{\Dta}\left(e^{2ikL}\mathcal{F}_{1j}\right)_+\right\}dk
\ee
\end{subequations}
where we used that the functions
\[
\frac{1}{\Dta}\left(e^{-2ikL}c_{1j}\right)_+,\quad \frac{1}{\Dta}\left(\bar{c}^{T}_{j1}\right)_+
\]
are bounded and analytic in $D^{0}_1$ so that its contributions vanish by Jordan's lemma. Letting $t'\uparrow t$ in this equation and using the initial conditions (\ref{Goursatintdata}) as well as the following lemma, we find the representation in (\ref{g0f0rep}).

\begin{lemma}
\begin{subequations}
\be
\lim_{t'\uparrow t}\int_{\pt D^{0}_1}\frac{1}{\Dta}e^{4ik^2(t-t')}\left(e^{-2ikL}F_{1j}\right)_+dk=\int_{\pt D^{0}_1}\frac{1}{\Dta}\left(e^{-2ikL}F_{1j}\right)_+dk
\ee
\be
\lim_{t'\uparrow t}\int_{\pt D^{0}_1}\frac{1}{\Dta}e^{4ik^2(t-t')}\left(e^{2ikL}\mathcal{F}_{1j}\right)_+=\int_{\pt D^{0}_1}\frac{1}{\Dta}\left(e^{2ikL}\mathcal{F}_{1j}\right)_+
\ee
\end{subequations}
\end{lemma}
\begin{proof}
The proof of this lemma is similar to the proof of Lemma \ref{DtoNlemma}
\end{proof}

Hence, we complete the proof of theorem \label{maintheorem}.

\end{proof}

\subsection{Equivalence of the two representations}
We will show that the representations derived using the GLM approach in theorem \ref{maintheorem} coincide with those of theorem 4.3 in \cite{jf3}.

\begin{itemize}
  \item {\bf The Dirichlet Problem, i.e., the representations for $g_1(t)$ and $f_1(t)$.} Using the expression (\ref{F1jexpress}) for $F_{1j}$ as well as the formulas
      \[
      \hat M_{ij}=\frac{1}{2k}\Phi_{ij,-},\quad \widehat{\mathcal{M}}_{ij}=\frac{1}{2k}\phi_{ij,-},\quad i,j=1,2,3,
      \]
      we can write the representation of (\ref{g1rep}) as
      \be
      \ba{rl}
      g_1(t)=&\frac{4}{i\pi}\int_{\pt D^{0}_1}\left\{\frac{\Sig}{2\Dta}\left[k\Phi_{1j,-}+ig_0(t)\right]-\frac{\sig}{\Dta}\left[k\bar{\phi}^{T}_{j1,-}+i\sig f_0(t)\right]\right.\\
      &+\frac{ig_0(t)}{4}\Phi_{2\times 2,-}+\frac{g_0(t)}{4i}\bar{\phi}^T_{2\times 2,-}-\frac{k}{\Dta}\left[\frac{ig_0(t)}{4k}\Phi_{2\times 2,-}e^{-2ikL}\right]_-\\
      &\left.-\frac{k\sig}{\Dta}\left[(\Phi_{11}-1)\bar{\phi}^T_{j1}\right]_-+\frac{k}{\Dta}\left[\Phi_{1j}(\bar{\phi}^{T}_{2\times 2}-\id)e^{-2ikL}\right]_-
      \right\}dk.
      \ea
      \ee
      In view of the identity
      \[
      -\frac{k}{\Dta}\left[\frac{ig_0(t)}{4k}\Phi_{2\times 2,-}e^{-2ikL}\right]_-+\frac{ig_0(t)}{4}\Phi_{2\times 2,-}=\frac{ig_0(t)}{2}\Phi_{2\times 2,-},
      \]
      and recalling the definition of the $3\times 3$ partition matrix, we can obtain the same results (4.30) in \cite{jf3}.  Similar computations show that the representations for $f_1$ are also equivalent.

      \item {\bf The Neumann Problem, i.e., the representations for $g_0(t)$ and $f_0(t)$.} Using the expression (\ref{F1jexpress}) for $F_{1j}$ as well as the formulas
      \[
      \Phi_{1j,+}=2\hat{\tilde{L}}_{1j}-ig_0(t)\hat M_{2\times 2},\quad \phi_{1j,+}=2\widehat{\tilde{\mathcal{L}}}_{1j}-i\sig f_0\widehat{\mathcal{M}}_{2\times 2},
      \]
      a straightforward computation shows that the representations of (\ref{g0f0rep}) and the equations (4.31) in \cite{jf3} are equivalent.
\end{itemize}

\bigskip

{\bf Acknowledgements}
This work of Xu was supported by National Natural Science Foundation of China under project NO.11501365, Shanghai Sailing Program
supported by Science and Technology Commission of Shanghai Municipality
under Grant NO.15YF1408100, Shanghai youth teacher assistance program NO.ZZslg15056 and the Hujiang Foundation of China (B14005). Fan was support by grants from the National
Science Foundation of China (Project No.10971031; 11271079; 11075055).

\end{document}